\newtheorem{thm}{Theorem}
\newtheorem{definition}{Definition}
\newtheorem{lem}{Lemma}
\newtheorem{rem}{Remark}
\newtheorem{exa}{Example}
\newcommand{\Z}{\mathbb{Z}}
\newcommand{\overbow}[1]{
   \tikz [baseline = (N.base), every node/.style={}] {
      \node [inner sep = 0pt] (N) {$#1$};
      \draw [line width = 0.4pt] plot [smooth, tension=1.3] coordinates {
         ($(N.north west) + (0.1ex,0)$)
         ($(N.north)      + (0,0.5ex)$)
         ($(N.north east) + (0,0)$)
      };
   }
} %% \overbow is a bow over letters in mathmode to symbolyse something
\newcommand{\cycle}[1]{\overbow{#1}} %% cycles going left until finding a right letter
\newcommand{\cplus}[2]{#1\hspace*{-0.04cm}{+}\hspace*{-0.04cm}#2}
\newcommand{\cmin}[2]{#1\hspace*{-0.04cm}{-}\hspace*{-0.04cm}#2}
\newcommand{\support}[1]{\mathrm{supp}(#1)}
\newcommand{\comment}[1]{{\footnotesize\color{gray}{:#1}}\normalsize}
\title{Ants on the highway}
\author{Anahí Gajardo, Victor H. Lutfalla, Michaël Rao}
\date{Departamento de Ingenier\'ia Matem\'atica \& Centro de Investigación en Ingeniería Matemática, Universidad de Concepci\'on,
Casilla 160-C, Concepci\'on, Chile\\
\texttt{anahi@ing-mat.udec.cl}\\
Aix-Marseille Univ, CNRS, I2M, Marseille, France\\
\texttt{victor.lutfalla@math.cnrs.fr}\\
CNRS\& LIP, Ecole Normale Supérieure de Lyon, France\\
\texttt{michael.rao@ens-lyon.fr}\\
}
\begin{document}
\maketitle

\begin{abstract}
We perform intensive computations of Generalised Langton's Ants, discovering rules with a big number of highways.
We depict the structure of some of them, formally proving that the number of highways which are possible for a given rule does not need to be bounded, moreover it can be infinite.
The frequency of appearing of these highways is very unequal within a given generalised ant rule, in some cases these frequencies where found in a ratio of 1/10$^7$ in simulations, suggesting that those highways that appears as the only possible asymptotic behaviour of some rules, might be accompanied by a big family of very infrequent ones.
\end{abstract}

{\bf Keywords:} Langton's ant, Turing machines

\section{Introduction}

Introduced in the 80's independently by different researchers~\cite{Lang86, KongCohe91JSP, DewdTurk}, the automaton mostly known as \emph{Langton's ant} remains intriguing.
It is cited in vulgarisation media as a paradigm of `emergent phenomena', `unpredictability', `abnormal diffusion' and as evidence of the impossibility of a `theory of everything'.

Its \emph{emerging phenomenon} consists in a pattern that makes the ant to move in such a way that after 104 steps, it is shifted by (-2,2) and the same pattern appears again in the configuration, also shifted.
Given this, the ant movements will repeat again and again, producing a periodic behaviour known as the \emph{highway} (see figure~\ref{fig:exs}(a)).
In the terminology of~\cite{Kurk}, this behaviour is due to the presence of a periodic point in the \emph{Moving Tape Model}, with the particularity that in the propagating direction, the background is homogeneous.

The intriguing fact is that this pattern spontaneously seems to appear on every simulation over initially finite configurations, the transient phase vary mostly from 1000 to 100,000 iterations for initial configurations with only one black cell in a radius of 3 cells around the ant position; and until now, no simulation suggest the conjecture might be false.

Several generalisations have been proposed: with more movements~\cite{BuniTrou92}, with more states~\cite{GaleProp94}, on other grids~\cite{GGM02}, and on other dimensions~\cite{DorbGaja}.
The generalization
that seems to better preserve the Langton's ant properties is the one we call \emph{generalised ants}~\cite{GaleProp94}.
In this class, emergent behaviours are also observed, with different periods, or different shapes.
In addition, there are rules where no highway or any similar asymptotic behaviour arrives for some initial finite configurations.

Motivated by the fact that computational simulations can show only a very restricted part of the whole (and infinite) space of possibilities, and that on such complex systems strange phenomena may appear only on extremely big initial configurations, we started a series of high performance simulations, looking for new behaviours that could put in evidence this limitation of the computational tools.

Section~2 introduces the model and makes a journey over the main theoretical founding about generalised ants since their discovery; also it formally defines the concept of ``\emph{highway}''.
Section~3 resumes what simulations gave us.
In Section~4, we establish our first theoretical result: {\em there is a family with a growing set of highways}; while our second result is demonstrated in Section~5: {\em there is a generalised ant with an infinite set of highways.}
We leave some reflections and overtures in Section~6.

\section{Generalised ants}\label{sec:defs}

Langton's ant is an agent, provided with an orientation in the plane $\{\rightarrow,\uparrow,\leftarrow,\downarrow\}$ that moves and turns either to the right or to the left depending on the colour of its underlying cell, which it flips after leaving the cell.
Cells can have only one of two colours, \emph{white} or \emph{black} (0 or 1). Over a white cell, the ant turns to the left and over a black cell the ant turns to the right.
Generalised ants have more colours, its movement rule is given by the meaning of each colour, and this one changes following a simple increasing cycle.

\begin{definition}{\cite{GaleProp94}}
A \emph{generalised ant} with rule word $w\in\{L,R\}^+$ is an automaton with state set $Q=\{\rightarrow,\uparrow,\leftarrow,\downarrow\}$, representing the four unitary vectors of $\Z^2$, for example $\leftarrow\ = (-1,0)$,  and alphabet $A = \{0,1,...,|w|-1\}=\Z/\Z_n $, that moves over the grid $\Z^2$.
A configuration is an element from $A^{\Z^2}\times \Z^2\times Q$, that represents:

\begin{itemize}
\item an assignment of symbols from $A$ to each cell in $\Z^2$ called \emph{picture},
\item a marked cell representing the \emph{position} of the ant, and
\item the current \emph{state} of the ant.
\end{itemize}

For a given configuration $(C,(i,j),d)\in A^{\Z^2}\times \Z^2\times Q$, the global transition function $T_w$ is defined by $T_w(C,(i,j),d)=(C',(i',j'),d')$, where:
\begin{itemize}
\item $C'(k,l)=\left\{\begin{array}{ll}
  C(k,l)+1\mod |w| & \text{if } (k,l)=(i,j)\\
  C(k,l)   & \text{otherwise} 
\end{array}\right.$,
\item $d'$ is a rotation of $d$ by $90^\circ$ clockwise if $w_{C(i,j)}=R$, or counterclockwise if $w_{C(i,j)}=L$,
\item $(i',j')=(i,j)+d'$.
\end{itemize}
\end{definition}

We call \emph{trace} of ant $w$ from configuration $(C,(i,j),d)$ the sequence of symbols that the ant encounters when evolving from configuration $(C,(i,j),d)$.
A trace contains all the necessary information to recover the initial picture, up to rotations, at least on the set of visited cells~\cite{GajaJAC}.

We assume that the rule word has at least one $L$ and one $R$, otherwise we call it \emph{trivial}, because in that case its behaviour will simply consist in constantly turning in a fixed sense.

An ant simulator is provided at \href{https://lutfalla.fr/ant/}{\texttt{lutfalla.fr/ant/}}, most of the results we present in this paper are illustrated there.

Very few theoretical results exists about this system.
The first remark that should be done is that the dynamics is reversible, that is, the transition function $T_w$ is bijective, thus no information is lost through the time.
After this, the most important result about ants establishes that: \emph{independently from the initial configuration, the ant orbit will always be unbounded.}
This result was formally proved by Bunimovitch and Troubetzkoy in~\cite{BuniTrou92} for the rule \emph{LR}, but it is easy to see that it is also true for the whole family of non-trivial ants.
The proof is very simple but clever, and it is based on a very strong property of this system: if the ant starts with an horizontal direction over a cell $(i,j)$, then it will always be with horizontal direction over every cell $(i',j')$ such that $i+j=i'+j'\mod 2$, and with vertical direction otherwise (Figure~\ref{fig:chess}).
This is because generalised ants always alternate between vertical and horizontal directions, and also the parity of the cell where they lands is always alternating between odd and even, which couples these two situations.

The most important consequence of this fact is that each cell has only two fixed entering sides: the ant can only enter a horizontal cell from east or west, and it can only enter a vertical cell from north or south. 
Then if an ant trajectory was finite, the set of visited cells should contains corners, that is cells with only two adjacent cells inside the set, and these forming an angle of $\pi/2$.
Independently on the orientation of this cell, horizontal or vertical, it has only one entering cell inside the set and also only one output, which is either to the left or to the right of the first, this imply that the cell can be only visited finitely many times, 
since the rule word contains at least one $L$ and one $R$.
But this is impossible, because the system is reversible, thus any orbit on a finite set is periodic,  
and the corner must be visited infinitely many times.

As a corollary, if in a given trajectory we consider the set of cells which are visited infinitely many times, we know that this set (if not empty) cannot have any corner.

  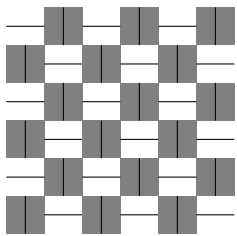
\begin{figure}[htb]
    \center
    \begin{tikzpicture}[scale=.5]
    \foreach \x in {0,2,...,4}{
        \foreach \y in {0,2,...,4}{
            \fill[gray] (\x,\y) rectangle ++(1,1);
            \fill[gray] (\x+1,\y+1) rectangle ++(1,1);
            \draw (\x+.5,\y) -- (\x+.5,\y+1);
            \draw (\x+1.5,\y+1) -- (\x+1.5,\y+2);
            \draw (\x+1,\y+.5) -- (\x+2,\y+.5);
            \draw (\x,\y+1.5) -- (\x+1,\y+1.5);
            }
        }
    \end{tikzpicture}
    \caption{Each cell has two fixed entering sides: either vertical or horizontal.}
    \label{fig:chess}
  \end{figure}

Up to our knowledge, the class of generalised ants was introduced by the first time in~\cite{GaleProp94}, where different behaviours are presented. 
Since then, other researchers started making simulations, finding quite surprising behaviours~\cite{BeurToma97}.
Figure~\ref{fig:exs} shows nice rules taken from
\href{https://mathtician.weebly.com/langtons-ant.html}{\texttt{\small mathtician.weebly.com/langtons-ant.html}}.

  \begin{figure}[htb]
    \center
    \begin{tabular}{cccc}
    \includegraphics[width=0.2\textwidth]{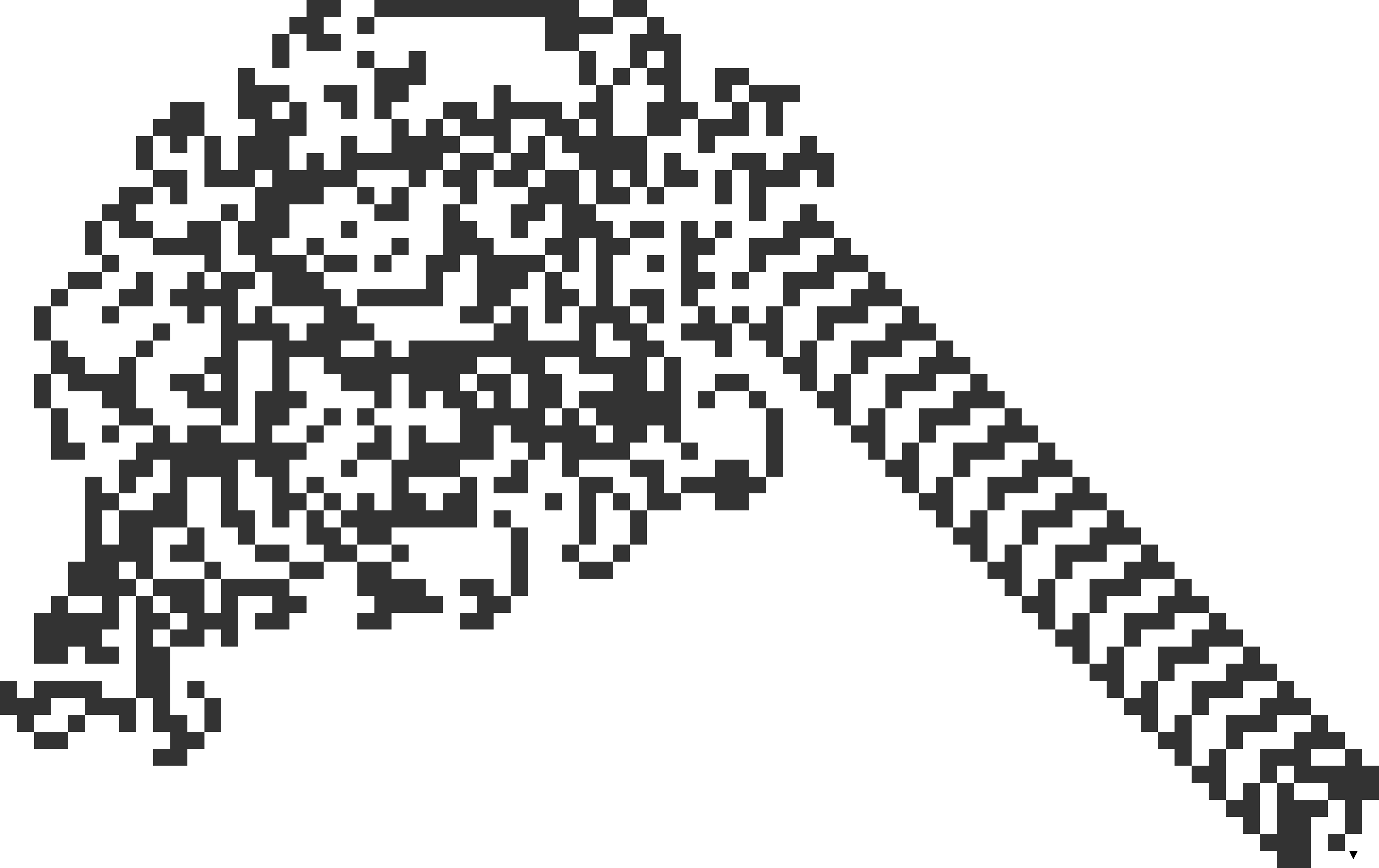} &
    \includegraphics[width=0.2\textwidth]{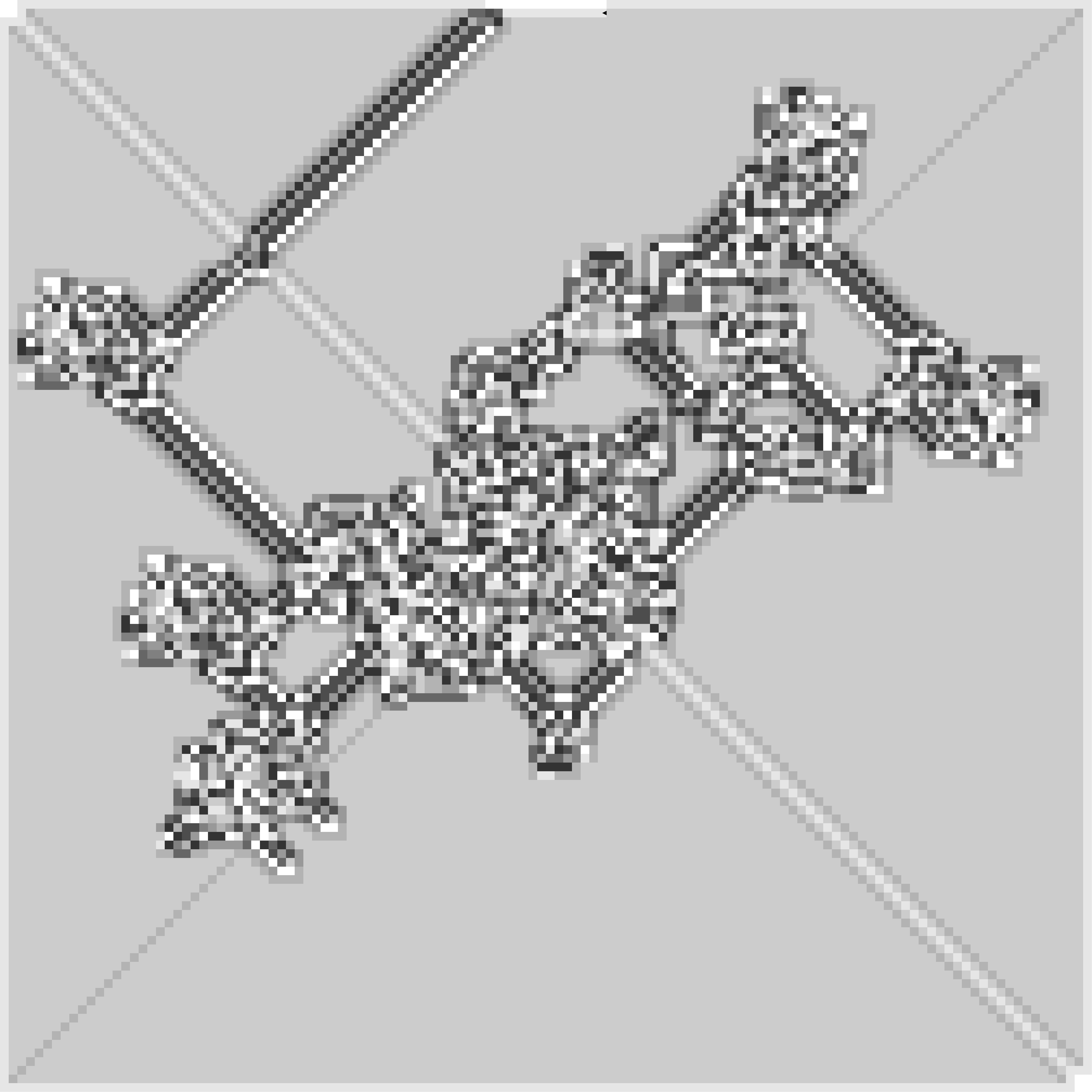} &   
    \includegraphics[width=0.2\textwidth]{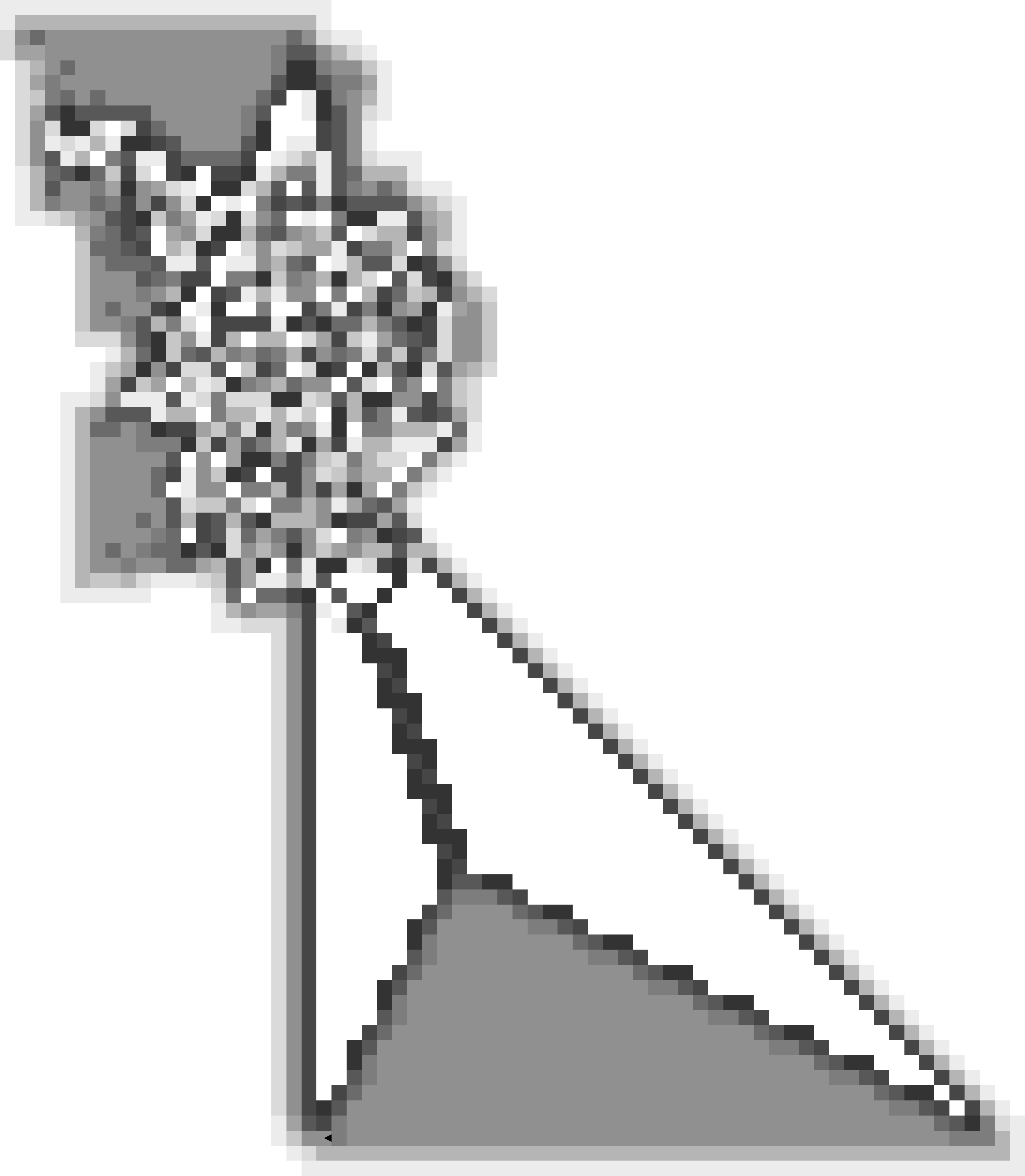} &  
    \includegraphics[width=0.2\textwidth]{zoology_ant_LRRRRLLLRRRL.pdf}    \\
    (a)&(b)&(c)&(d)
    \end{tabular}
    \caption{Zoology of emergent behaviours of generalised ants: (a) rule \emph{LR}, the original rule; (b) rule \emph{LRRRRRLLR}  builds a textured square over which highways repetitively born and crash; (c) rule \emph{LLRRRLRRRLLL} builds a growing triangle that travels on the plane; (d) rule \emph{LRRRRLLLRRRL} builds a square with a logarithmic spiral inside.}
    \label{fig:exs} 
  \end{figure}
  
Another beautiful result is proved in~\cite{GPST95}: \emph{All non-trivial rules whose code word $w$ belongs to $\{LL,RR\}^+$ will produce an evolution composed by an infinite sequence of closed trajectories that always comes back to the initial cell if the initial configuration satisfy the appropriate conditions}~\cite{GPST95}.
This result has two consequences: on the initially white configuration the ant evolution produces patterns with bilateral symmetry, as butterflies (figure~\ref{fig:LLRR}); and also, over this set of initial configurations, no \emph{highway} will ever appear.

  \begin{figure}[htb]
    \center
    \includegraphics[width=0.4\textwidth,angle=180]{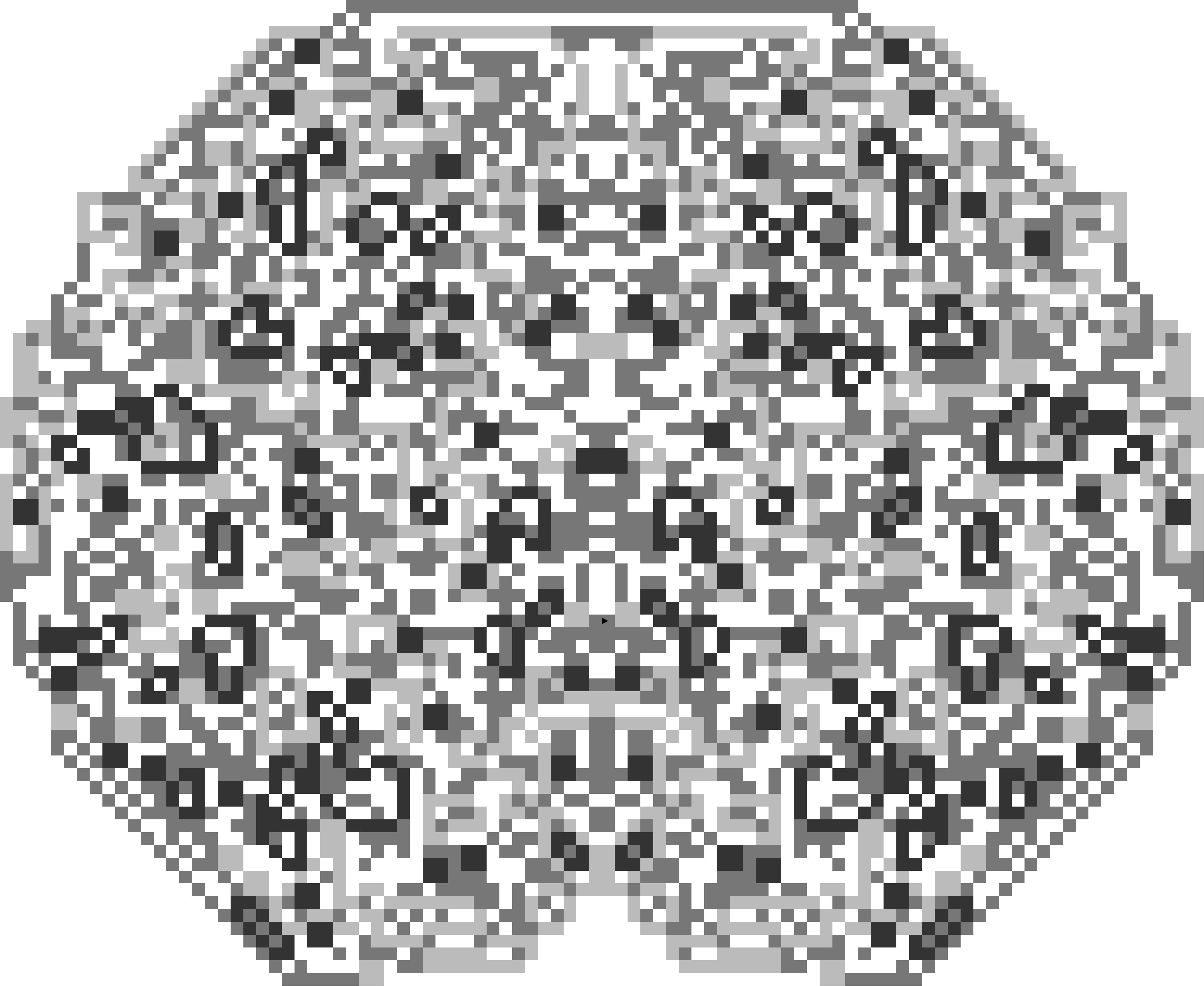}
    \caption{Configuration at iteration 817888 of rule LLRR starting with the initial configuration where all the cells are white (0).}
    \label{fig:LLRR}
  \end{figure}
  
The last known result about this class shows that predicting the trajectory of a non-trivial generalised ant is P-complete and some predictions are undecidable. More precisely, any non-trivial ant rule can compute logical circuits if the appropriate initial configuration is given. Moreover, there exists a periodical initial configuration such that, the dynamics of the ant can simulate a universal Turing machine over any finite input, if the appropriate finite perturbing pattern is applied~\cite{MGHM}.
This result somehow explains why the ant analysis is so hard.

In this paper we are interested in highways, that is, emergent patterns that makes the ant to follow a periodical movement that will propagate over an homogeneous background.
In order to formalise these phenomena, we need to introduce some notations and concepts.

We say that a configuration $(C,(i,j),d)\in A^{\Z^2}\times \Z^2\times Q$ is \emph{finite} if the set of cells with a symbol different than 0 is finite: $|\{(a,b)\ :\ C(a,b)\neq 0\}|<\infty$, in this case we also say that the picture $C$ is finite.
The picture full of 0s is called \emph{white}.

A \emph{pattern} is an assignment $P$ of symbols to a finite set of cells $S$, that is, a function $P:S\rightarrow A$ with $S\subseteq \Z^2$. The set $S$ is the \emph{support of $P$}, and it is denoted by $\support{P}$.
A restriction of a picture $C$ to a finite set $S$ is \emph{the pattern of $C$ on $S$}.

The transition function $T_w$ of a generalised ant can be extended to patterns if also a position and direction inside the pattern support is given, that is: $T_w$ may be applied to a tuple $(P,(i,j),d)$ if and only if $(i,j)\in \support{P}$.

Finally, patterns, pictures and configurations can be shifted by a vector $(a,b)\in\Z^2$.
The shift function $\sigma^{(a,b)}$ is defined on pictures, on configurations, on sets, and patterns by: 
\begin{itemize}
\item $\sigma^{(a,b)}(C)_{(i,j)}=C_{(i-a,j-b)}$, 
\item $\sigma^{(a,b)}(C,(i,j),d)=(\sigma^{(a,b)}(C),(i+a,i+j),d)$,
\item $\sigma^{(a,b)}(S)=S+\{(a,b)\}$,
\item $\sigma^{(a,b)}(P): \sigma^{(a,b)}(S)\rightarrow A$, $\sigma^{(a,b)}(P)_{(i,j)}=P_{(i-a,j-b)}$.
\end{itemize}

\begin{definition}
A \emph{highway} of period $N$ and drift $(a,b)$ of an ant rule $w$ is represented by a pattern $P$,  an ant position $(i,j)$ inside its support and an ant state $d$ such that the transition function $T_w$ can be applied $N$ times over it, after which, the new pattern $T^N_w(P)$ has the ant at position $(i,j)+(a,b)$ in state $d$ and the pattern $P$ appears inside $T^N_w(P)$ but shifted by $(a,b)$, more precisely:
\[ \forall (x,y)\in\support{P},\ P(x,y)=\left\{\begin{array}{ll} T^N_w(P)_{(x+a,y+b)}&\textrm{if }(x+a,y+b)\in\support{P}\\
0&\textrm{otherwise} \end{array}\right.\]
\end{definition}
This definition ensures that if the pattern appears in a configuration where all the cells in direction $(a,b)$ from $\support{P}$ are in state 0, then the same pattern will appear every $N$ steps and both the ant trace and ant movements will be $N$ periodic.
Behind, the cells in $\support{P}\setminus\support{\sigma^{(a,b)}(P)}$ will not be visited any more; repeated, they will configure the \emph{prints} leaved by the ant on the highway.

Of course, any of the patterns appearing during the evolution of a highway can be used to represent the same highway; we can understand a highway as a global phenomenon, or as a cyclic sequence of patterns.
Also, any rotation of a highway is the same highway, but with a rotated drift.

A highway can be characterized in terms of the trace of the ant, 
indeed if the trace of an ant that started on a finite configuration is eventually periodic, it is clear that the ant has entered a highway, because every periodic trace comes from a semiinfinite periodic pattern~\cite{GajaJAC}, and the only semiinfinite periodic pattern contained in an finite configuration is the white one. 
We use this characterisation to automatically detect highways in computer simulations.

\begin{exa}
The ant LR has a highway of period 104 and drift $(-2,2)$.
Figure~\ref{fig:h104} shows one of the patterns that represent it.
In gray are marked the non null cells, this smaller pattern is the one that we find shifted after 104 steps.
In light gray are drawn the cells with symbol 0.
If the background in direction $(-2,2)$ is full of 0s, the ant finds the whole pattern after 104 iterations, shifted by $(-2,2)$ and the history will repeat.
\begin{figure}[htb]
    \center
    \includegraphics[width=0.6\textwidth]{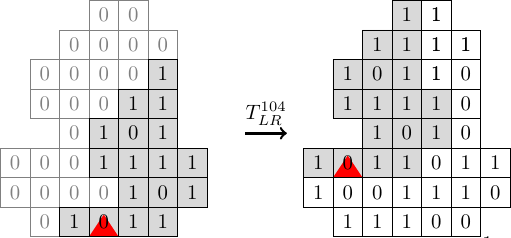}
    \caption{After 104 iterations the ant has moved by the vector $(-2,2)$, and the configuration which produced that movement is also shifted by $(-2,2)$.
    If the cells in direction $(-2,2)$ from this configuration have all the symbol 0, then this movement will repeat itself forever.}
    \label{fig:h104}
  \end{figure}
\end{exa}

\section{Simulations and classification}

Our goal is to tackle the highway conjecture.
In order to better understand the classical Langton's ant, we decided to study the highways and emerging behaviour of generalised ants.
The first step in this study was a large number of simulations of generalised ants over random perturbations of the $0$-uniform configuration.

For each computation, an initial pattern $P_0$ of size $11\times 11$ is taken uniformly at random, we then compute the trace $t$ of the ant for $10^5$ steps from the picture $C_0$ which is $0$-uniform with the pattern $P_0$ around the origin (and the ant at the centre of $P_0$).
The trace $t$ is then analysed to determine if a highway was reached.
For each ant, we ran the simulation at least $10^8$ times.
Table~\ref{tab:stats} shows the statistics of these simulations for some generalised ants which have remarkable properties.%%Michael: c'est 10^8 ou 10^6?

For these simulation we are only interested in the length or period of the highways and not in the exact configuration, pattern or trajectory of the highway.
This means that multiple highways with different trace, drift or trajectory are counted as the same if they have the same period.

\begin{table}[htb!]
  \begin{tabular}{|l | r | p{.62\textwidth} |}
    \hline
    \textbf{Ant word} & \textbf{Highways} & \textbf{Period and frequency of the highways}\\
    \hline
    $LR$ & 100\% & 104\comment{100\%}\\
    $LLR$ & 100\% & 18\comment{100\%}\\
    $LLLR$ & 100\%& 52\comment{99.88\%},\ 156\comment{0.12\%}\\
    $LLLLR$ & 100\% & 34\comment{30.1\%},\ 68\comment{69.9\%}\\
    $L^5R$ & 100\% & 84\comment{100\%}\\
    $L^6R$ & 100\% & 50\comment{18.63\%},\ 100\comment{81.37\%}\\
    $L^7R$ & 100\% & 116\comment{100\%}\\
    $L^8R$ & 100\% & 66\comment{13.4\%},\ 132\comment{86.6\%}\\
    $L^9R$ & 100\% & 148\comment{100\%}\\
    $L^{10}R$ & 100\% & 82\comment{10.45\%},\ 164\comment{89.55\%}\\
    $L^{11}R$ & 100\% & 180\comment{100\%}\\
    $L^{12}R$ & 100\% & 98\comment{8.58\%},\ 196\comment{91.42\%}\\
    & & \\
    \hline
    $LLRL$ & 22.95\% & 308\comment{$3\cdot 10^{-4}\%$},\ 380\comment{$4\cdot 10^{-7}\%$},
                       384\comment{99.9997\%},\\
    & & 388\comment{$4\cdot 10^{-6}\%$}, 928\comment{$2\cdot 10^{-7}\%$}\\
    \hline
    $LLRLRL$ & 1.52\% & 220\comment{29.8\%},
                        244\comment{10.6\%},
                        268\comment{5.3\%},
                        284\comment{0.7\%},
                        292\comment{2.5\%},
                        300\comment{9.4\%},
                        308\comment{0.9\%},
                        316\comment{1.5\%},
                        324\comment{8\%},
                        332\comment{0.6\%},
                        340\comment{1\%},
                        348\comment{6.7\%}
                        356\comment{0.9\%},
                        364\comment{1\%},
                        372\comment{4.1\%},
                        380\comment{0.8\%},
                        388\comment{4.2\%},
                        396\comment{2.9\%},
                        404\comment{0.3\%},
                        412\comment{1\%},
                        420\comment{2.2\%},
                        428\comment{0.3\%},
                        436\comment{0.8\%},
                        444\comment{1.5\%},
                        452\comment{0.4\%},
                        460\comment{0.6\%},
                        468\comment{0.7\%},
                        1196\comment{0.6\%},
                        1268\comment{0.3\%},
                        1388\comment{0.03\%},
                        1412\comment{0.1\%}   \\
    \hline
    \end{tabular}

  \caption{Highway behaviours from simulations.\\
    This table is read as follows: for ant $LLRL$, in our computations from initial random configurations, we reached a highway behaviour in $22.95\%$ of the computations. Amongst those highway behaviour there is a dominant highway of length $384$ which was reached in $99.9997\%$ of cases and other rare highways of length $308$ (freq. $3\cdot 10^{-4}\%$), $380$ (freq. $4\cdot 10^{-7}\%$) and $928$ (freq. $2\cdot 10^{-7}\%$). }
    \label{tab:stats}
\end{table}

Note that each highway that was found is an actual highway of the corresponding ant, however we make no claim that we found all the possible highways for any of these ants, nor that the proportions we observed are close to the actual probability of reaching the corresponding highway from an initial perturbation of size $n\times n$ taken uniformly at random.

We also tried the same process with another shape of the inital random pattern (for example cross shaped support instead of square support).
In our experience, this did not change significantly the behaviours or frequencies.

From these computations, we draw a few observations.

\paragraph{The $L^+R$ ants behave nicely.}
These ants are the simplest generalised ants and they seem to behave very nicely, though the original $LR$ ant stands out in the behaviour we observe.
Our first observation is that a highway behaviour is reached in every computation we ran.
We then observe two clear subfamilies :
the $L^{2k}R$ ants seem to each have a highway of period $16k+2$ and for $k\geq 2$ a second highway of period $32k+4$;
while the $L^{2k+1}R$ ants seem to each have only a highway of period $32k+20$, but the rule $L^3R$ has two highways, one of them quite infrequent, 3 times longer and with a much bigger support.
One can ask whether other $L^{2k+1}R$ ants have another big and very infrequent highway as $L^3R$ does?
%%%Note: j'a mi ceci ici, car je crois que ce ne pas convenable de laisser notre discusion dans la version Hal

Additionally, we observe that for ant $L^{2k}R$, as $k$ grows, the short (call \emph{fundamental}) highway gets rare and the long (call \emph{harmonic}) highway gets more and more dominant.
This is partly explained by the fact that the harmonic highway is actually $k-1$ distinct highways with the same period as detailed in Section \ref{sec:l2kr}.

\paragraph{The $LLRL$ ant has an overwhelmingly dominant highway, and very rare highways.}
From most initial configurations, no highway behaviour seems to be reached quickly (that is, before $\sim 10^5$ steps). 
Amongst the configuration where a highway is reached there is a dominant highway of length $384$ that is reached in $99.9997\%$ of cases and a few very rare other highways of length $308$, $380$, $388$ and $928$ which all combined are reached in about $0.0003\%$ of cases.
In particular, from the initial $0$-uniform grid, the ant reaches the dominant $384$ period highway after $256100$ steps.
These highways are illustrated on our online simulator\footnote{\href{https://lutfalla.fr/ant/highway.html?antword=LLRL}{\texttt{\small lutfalla.fr/ant/highway.html?antword=LLRL}}.}.

This behaviour that we observe, with a highway which exists but is extremely rare, conforts us in the idea that it just might be possible that the $LR$ ant has other possible asymptotic behaviours but they are so extremely rare that they were never encountered in computations.

\paragraph{The $LLRLRL$ ant has infinitely many highways.}
Among the few initial configurations that reached a highway, it is quite remarkable that no highway is too infrequent, as was the case for rule LLRL.
More remarkably still, we observe arithmetic progressions  in the period of the highways, correlated with some tipicity in the frequencies of these.
In particular we find a family\footnote{\href{https://lutfalla.fr/ant/highway.html?antword=LLRLRL}{\texttt{lutfalla.fr/ant/highway.html?antword=LLRLRL}}.} of highways of length $220+24n$,
though we also find lengths of the form $300+24n$ and $308+24n$.
When looking at the highways them selves, we see that the period does not determine the highway, we find different trajectories  with the same period.
Note that, from the initial $0$-uniform grid, and even after $10^{10}$ steps, the $LLRLRL$ ant does not present any highway behaviour (nor any emerging behaviour). 

This ant, though remarkable, does not seem to be the only ant with infinitely many highways.
Our simulations suggest that the ants with rule words $LLRLRLL$, $LLRLRRLL$, $LLRLRRLL$, $LLRLRRLR$ and $LLRRRLL$ all have infinitely many highways.

\section{Multiple highways for simple family of ants}
\label{sec:l2kr}

The simplest generalised ants are the $L^+R$ ants, that is ants whose rule word has $n$ $L$ states and then a $R$ state.
The classical Langton's ant being the special case for $n=1$.

This family of generalised ants already contains many asymptotic highway behaviours and in this section we are interested in the subfamily $(LL)^+R$, or $L^{2k}R$ ants.

These ants and the highways we present in this section are illustrated on \href{https://lutfalla.fr/ant/highway.html?antword=L2KR&k=4&i=2}{\texttt{\small lutfalla.fr/ant/highway.html?antword=L2KR\&k=4\&i=2}}.
Values for \texttt{k} 
(number of $(LL)$s in the ant word) 
and \texttt{i} (variant of the highway) can be modified in the interface and url. 

\begin{lem}[$L^{2k}R$ cycles] \label{lemma:l2kr_elementary_cycles}
  Let $a,b,c,d \in \mathbb{N}$ such that $2k \geq a \geq b,c,d \geq 0$.
  Let $b' = b+(2k-a)$, $c' = c+(2k-a)$, $d' = d+(2k-a)$.
  Let $P$ and $P'$ be the patterns of Fig.~\ref{fig:l2kr_elementary_cycle}.

  Then $T_{L^{2k}R}^{4(2k-a)}(P) = P'$.

  \begin{figure}[htb]
    \center
    \includegraphics[width=0.3\textwidth]{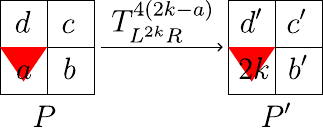}
    \caption{An elementary cycle of the $L^{2k}R$ ant.}
    \label{fig:l2kr_elementary_cycle}
  \end{figure}
\end{lem}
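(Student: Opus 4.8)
The plan is to show that, starting from $P$, the ant does nothing more than run a left-turning loop around a fixed $2\times 2$ block of cells, going around it exactly $m:=2k-a$ times, and then stops. Recall first that in the rule word $L^{2k}R$ the symbols $0,1,\dots,2k-1$ carry the instruction $L$ while only the symbol $2k$ carries $R$, so whenever the ant stands on a cell whose symbol lies in $\{0,\dots,2k-1\}$ it turns left (counterclockwise). The elementary observation behind everything is a ``unit loop'' fact: if the ant sits on a cell in state $d$ and the four cells it would visit while turning left four times in succession all carry symbols in $\{0,\dots,2k-1\}$, then four applications of $T_{L^{2k}R}$ make it turn left four times, visit each of those four cells exactly once, add $1$ to each of their symbols, and return to its starting cell in state $d$. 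This is immediate from the definition of $T_w$ -- four consecutive $90^\circ$ counterclockwise turns close up into the boundary tour of a $2\times 2$ block and bring the head back to its starting cell and state, incrementing each visited cell once.

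With this in hand the argument is a short induction on $\ell\in\{0,1,\dots,m\}$ with statement: after $4\ell$ applications of $T_{L^{2k}R}$ to $P$, the ant is back at its initial cell and state and the four block cells of Fig.~\ref{fig:l2kr_elementary_cycle} carry $a+\ell,\ b+\ell,\ c+\ell,\ d+\ell$. The case $\ell=0$ is just the choice of $P$. For the step from $\ell$ to $\ell+1$ with $\ell\le m-1$, the hypothesis $2k\ge a\ge b,c,d$ gives that each of $a+\ell,b+\ell,c+\ell,d+\ell$ is at most $a+\ell\le a+(m-1)=2k-1$, hence lies in $\{0,\dots,2k-1\}$; so the unit-loop fact applies to the configuration reached after $4\ell$ steps and produces the configuration after $4(\ell+1)$ steps, with each block symbol raised by one and the head back at its initial cell and state. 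Evaluating at $\ell=m$ gives $T_{L^{2k}R}^{4(2k-a)}(P)=P'$: the head has returned to its starting position and state, the four block cells carry $a+m=2k$, $b+m=b'$, $c+m=c'$, $d+m=d'$, and no cell outside the block changes because the entire trajectory stays on that block, inside $\support{P}$.

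The only delicate point is the arithmetic of the last lap, and this is precisely where the hypotheses are used up and shown to be sharp: the loop survives all $4(2k-a)$ steps because the largest symbol the head ever reads is $2k-1$, reached on the final visit to a cell whose value in $P$ equals the maximum $a$. If some one of $b,c,d$ were larger than $a$, or if one ran a further lap, the head would read the symbol $2k$, turn clockwise, and break out of the loop; so both the inequality $a\ge b,c,d$ and the exact step count $4(2k-a)$ are forced, and no structural input beyond this direct simulation (no reversibility, no corner argument) is needed.
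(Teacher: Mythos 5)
Your proof is correct and follows essentially the same route as the paper's: the ant, reading only symbols below $2k$, performs $2k-a$ complete counterclockwise $4$-cycles around the $2\times 2$ block, with the hypothesis $a\geq b,c,d$ guaranteeing that no cell reaches the right-turning symbol $2k$ before the final lap is completed. You merely make the paper's one-line argument explicit via an induction on the number of laps, which is a welcome (if not strictly necessary) refinement.
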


\begin{proof}
   We first remark that the lemma is trivially true when $a=2k$, because in that case $P=P'$ and $4(2k-a)=0$, we can thus assume $a<2k$.
  As we have $a,b,c,d< 2k$, the ant $L^{2k}R$ turns left and stays in the domain $\support{P}$ until it reaches a state $2k$.
  As $a\geq b,c,d$, the first state to become $2k$ and the first $2k$ state to be reached will be in the lower left position (where the state $a$ and the ant initially were).
  Thus, the ant turns left and does $2k-a$ complete left $4$-cycles and reaches configuration $P'$.
\end{proof}

Denote $\cycle{a.b.c.d}=a.b.c.d.a+1.\cdots.2k-1.b+2k-a-1.c+2k-a-1.d+2k-a-1$ the trace of the $L^{2k}R$ ant from $P$ to $P'$, .

\begin{lem}[$L^{2k}R$ almost highways]\label{lemma:l2kr_almost_highway}
  Let $0<i<2k$.

  Let $P_i$ and $P_{2k-i}'$ be the patterns of Fig.~\ref{fig:l2kr_almost_highway_overview}.

  We have $T_{L^{2k}R}^{24k-8i+2}(P_i) = P_{2k-i}'$.

  \begin{figure}[htb]
    \center \includegraphics[width=0.6\textwidth]{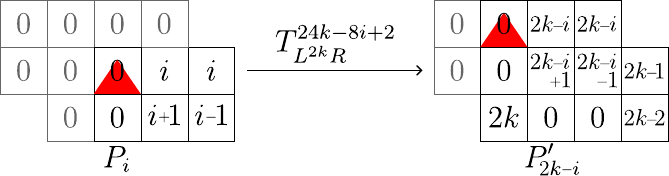}
    \caption{The almost highway of the $L^{2k}R$ ant.}
    \label{fig:l2kr_almost_highway_overview}
  \end{figure}
\end{lem}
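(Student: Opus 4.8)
The plan is to split the $24k-8i+2$ iterations into a short list of blocks, each of which is either an elementary cycle handled by Lemma~\ref{lemma:l2kr_elementary_cycles} or a bounded-complexity stretch of ``free'' motion (turns and, as in any highway, possibly a long straight run), and then to verify that the blocks chain correctly. First I would read off from Figure~\ref{fig:l2kr_almost_highway_overview} the ant's position and state in $P_i$, together with the states of the finitely many non-zero cells of $\support{P_i}$, and follow the orbit. The expected picture is that the ant repeatedly arrives in the local configuration of Figure~\ref{fig:l2kr_elementary_cycle}, sitting at the corner carrying the largest state; each such arrival triggers Lemma~\ref{lemma:l2kr_elementary_cycles}, contributing a trace block $\cycle{a.b.c.d}$ of length $4(2k-a)$, after which a controlled number of steps carries the ant --- possibly over fresh $0$-cells --- into the next such configuration, with the new corner states obtained from the old ones by the $+(2k-a)$ shift of Lemma~\ref{lemma:l2kr_elementary_cycles}. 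The depth parameters $a$ of the successive cycles will be affine in $i$ (typically involving $i$ and $2k-i$), which is how a fixed number of blocks produces both the affine length $24k-8i+2$ and the reversal $i\mapsto 2k-i$ in the statement.

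The verification then has two routine components. The arithmetic: the cycle lengths $4(2k-a_1)+4(2k-a_2)+\cdots$ plus the lengths of the free stretches must sum to exactly $24k-8i+2$; since each cycle length is divisible by $4$ and $24k-8i+2\equiv 2\pmod 4$, the free stretches must contribute $2\pmod 4$, which already constrains the bookkeeping. The consistency: at each use of Lemma~\ref{lemma:l2kr_elementary_cycles} one checks its hypothesis $2k\ge a\ge b,c,d\ge 0$ for the current corner, and that the free steps in between really transform the output states $b'=b+(2k-a)$, $c'=c+(2k-a)$, $d'=d+(2k-a)$ into the input states of the next corner. Once the orbit is nailed down, matching $T_{L^{2k}R}^{24k-8i+2}(P_i)$ against $P_{2k-i}'$ cell by cell is immediate, as every state is an explicit affine function of $k$ and $i$.

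The hard part will be the geometric glue between consecutive cycles. Lemma~\ref{lemma:l2kr_elementary_cycles} is stated only for the specific local picture of Figure~\ref{fig:l2kr_elementary_cycle}, with the ant in the lower-left corner in one fixed orientation, so before each reuse one must argue that the ant genuinely arrives there, up to rotation, with the right cell states. The tools that keep this from exploding into cases are the parity coupling of Figure~\ref{fig:chess} --- the ant's axis is forced by the parity of its cell, so it can only enter each cell from two fixed sides --- and the fact that every state $<2k$ forces a left turn, so the ant cannot leave a region of small states until some cell reaches $2k$. This is also exactly where the reflection $i\mapsto 2k-i$ is born: the connecting steps swap the two ``deep'' directions of the corner, so a cycle of depth $i$ is followed by one of depth $2k-i$, and tracking this swap correctly through every transition is the one place where a mistake is easy and where I would write the argument out most carefully.
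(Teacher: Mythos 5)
Your plan is essentially the paper's own proof: the authors decompose the $24k-8i+2$ steps into three elementary cycles (of depths $a=0$, $a=i$, $a=i+1$, hence lengths $8k$, $8k-4i$, $8k-4i-4$) glued by exactly six individual moves, which matches your mod-$4$ bookkeeping, and they delegate the geometric verification to a figure just as you propose to do by hand. The only thing you leave open --- the precise number and depths of the cycles --- is exactly what the paper's figure supplies, so the approach is the same.
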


\begin{proof}
  We decompose the trace of the $L^{2k}R$ ant starting from $P_i$ in three elementary cycles of lengths $8k$, $8k-4i$, $8k-4(i+1)$ and in six individual moves as depicted in Fig.~\ref{fig:l2kr_almost_highway_decomposition}.

  This means that the trace from $P_i$ to $P_{2k-i}'$ is $t_{k,i}$, where
  \[ t_{k,i} := \cycle{0.0.0.0}.2k.\cycle{i.0.0.0}.2k.\cycle{\cplus{i}{1}.\cmin{i}{1}.i.0}.2k.2k.\cmin{2k}{i}.2k. \]
  
  \begin{figure}[htb]
    \center \includegraphics[width=\textwidth]{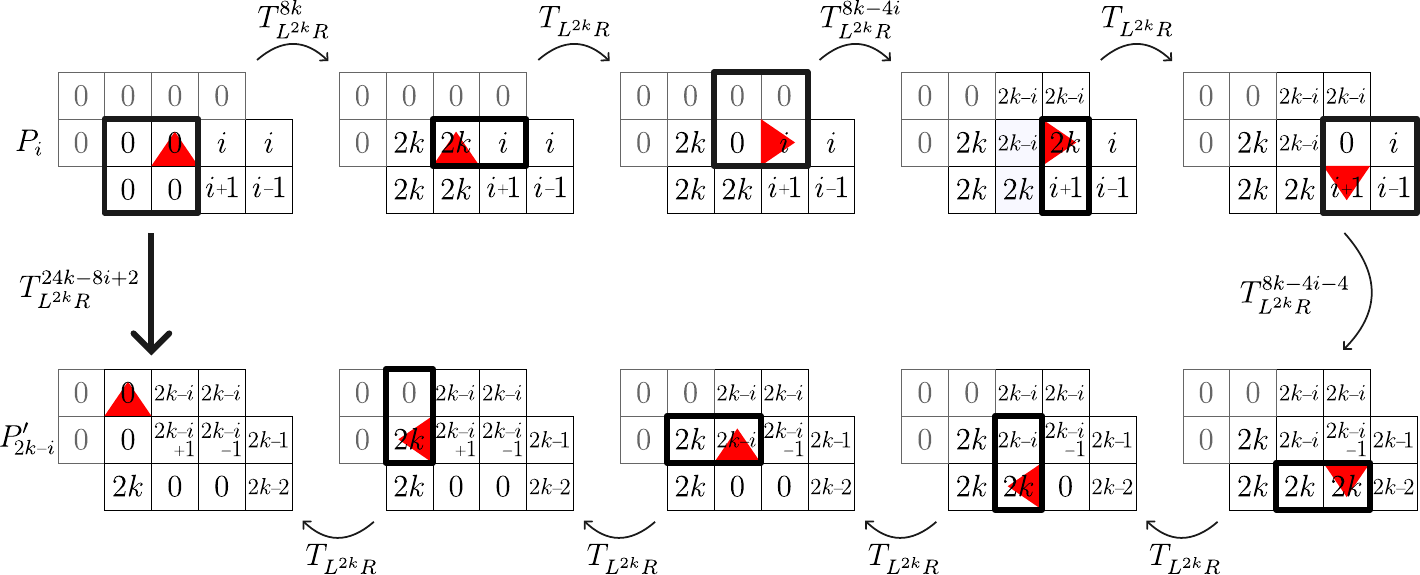}
    \caption{Decomposition of the almost highway for the $L^{2k}R$ ant.\\
      At each step, the domain of the next move or elementary cycle is emphasised with bold boundary.}
    \label{fig:l2kr_almost_highway_decomposition}
  \end{figure}
\end{proof}

Additionally, we remark that for $i\neq j$ the trace 
$t_{k,i}\cdot t_{k,i}$ does not contain the factor subword $j+1.j-1.j.0$ contained in $t_{k,j}$ so none of these almost-highways is a sub-dynamic of another.

Now we explain how the ``almost highways'' can be used to construct complete highways.

\begin{thm}[Fundamental and harmonic highway for $L^{2k}R$]
  \label{th:l2kr_length}
For any integer $k>1$, the $L^{2k}R$ ant admits at least two different highways :
\begin{itemize}
    \item a fundamental highway of period $16k+2$ and drift $(\pm 1, \pm 1)$,
    \item a harmonic highway of period $32k + 4$ and drift $(\pm 2, \pm 2)$.
\end{itemize}
\end{thm}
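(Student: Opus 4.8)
The plan is to assemble both highways directly from the ``almost highways'' of Lemma~\ref{lemma:l2kr_almost_highway}, using two numerical coincidences: $24k-8i+2 = 16k+2$ when $i=k$, and $(24k-8i+2)+(24k-8(2k-i)+2) = 32k+4$ for \emph{every} admissible $i$. The one fact I would first extract from Fig.~\ref{fig:l2kr_almost_highway_overview} is the precise geometric relation between the input pattern $P_j$ and the output pattern $P_j'$ of an almost highway: that $P_j'$ is the translate $\sigma^{\vec v}(P_j)$ of $P_j$ by a diagonal unit vector $\vec v\in\{(\pm1,\pm1)\}$ (the direction in which the highway travels), that the ant in $P_j'$ sits at the correspondingly translated cell in the same state, and that every cell of $\support{P_j'}\setminus\sigma^{\vec v}(\support{P_j})$ carries the symbol $0$. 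Equivalently, the tuple built from $P_j$ with its ant position and state meets verbatim the condition of the highway definition after $24k-8i+2$ steps, \emph{except} that the lemma changes the index from $j=i$ to $j=2k-i$, so one round is in general not yet a highway.

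For the fundamental highway I would simply take $i=k$, which is admissible since $0<k<2k$. Lemma~\ref{lemma:l2kr_almost_highway} then gives $T_{L^{2k}R}^{16k+2}(P_k)=P_{2k-k}'=P_k'$, and as the index has not moved, the relation above says that $P_k'$ is $\sigma^{\vec v}(P_k)$ in the sense required by the highway definition, with $\vec v=(\pm1,\pm1)$. Hence $P_k$, together with its ant position and state, is a highway of period $16k+2$ and drift $(\pm1,\pm1)$.

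For the harmonic highway I would fix any $i$ with $1\le i\le 2k-1$ and $i\ne k$ — possible precisely because $k>1$ — and apply Lemma~\ref{lemma:l2kr_almost_highway} twice. The first application maps $P_i$ to $P_{2k-i}'$ in $24k-8i+2$ steps; reading $P_{2k-i}'$ as a translate of $P_{2k-i}$ (and enlarging the seed support if necessary, the extra cells being $0$), the second application maps $P_{2k-i}$ to $P_{2k-(2k-i)}'=P_i'$ in $24k-8(2k-i)+2 = 8k+8i+2$ steps, and $P_i'$ is again a translate of the original $P_i$. Composing the two displacements and the two lengths yields $T_{L^{2k}R}^{32k+4}(P_i)=\sigma^{\vec v_1+\vec v_2}(P_i)$ in the highway sense; since the ant propagates throughout in the one fixed diagonal direction, $\vec v_1=\vec v_2$ and so $\vec v_1+\vec v_2=(\pm2,\pm2)$. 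This is a highway of period $32k+4$ and drift $(\pm2,\pm2)$, which differs from the fundamental one already because the drifts differ. (The remark following Lemma~\ref{lemma:l2kr_almost_highway} on occurrences of the factor $j+1.j-1.j.0$ shows moreover that distinct unordered pairs $\{i,2k-i\}$ give non-comparable traces, so the $k-1$ harmonic highways are pairwise distinct.)

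No new case analysis on the rule word is needed, since all of it is already packed into Lemmas~\ref{lemma:l2kr_elementary_cycles} and~\ref{lemma:l2kr_almost_highway}. The only genuinely delicate points are the two figure-reading steps: that $P_j'$ is a \emph{clean} diagonal translate of $P_j$ — all newly occupied cells equal $0$, and the ant lands at the translated cell in the same state, so the highway definition is met literally and the $0$-background ahead of the drift is preserved — and that in the two-round composition the two unit drifts are parallel rather than antiparallel, so that they add to $(\pm2,\pm2)$ instead of cancelling. Checking both reduces to tracking the ant's position and orientation through Fig.~\ref{fig:l2kr_almost_highway_overview} and its decomposition in Fig.~\ref{fig:l2kr_almost_highway_decomposition}; I expect that bookkeeping, not any conceptual gap, to be the main work.
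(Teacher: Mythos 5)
Your proposal is correct and follows essentially the same route as the paper: take $i=k$ in Lemma~\ref{lemma:l2kr_almost_highway} for the fundamental highway and compose the rounds for $i$ and $2k-i$ (the paper uses $i=1$) for the harmonic, the paper merely making the ``clean translate plus print'' bookkeeping explicit by writing out the configurations $C_{k,i}^n$ that carry the three diagonal trails. One small caveat: ``the drifts differ'' does not by itself separate the harmonic from the fundamental, since the fundamental traversed twice is also a period-$32k+4$, drift-$(\pm2,\pm2)$ highway; the correct separation is the trace-factor argument you relegate to the parenthetical, which is exactly how the paper shows $32k+4$ is the minimal period.
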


\begin{proof}
  Let $k>1$ and consider the and $L^{2k}R$.

  For $n\geq 0$, let $C_{k,i}^{n}$ be the $\mathbb{Z}^2$ picture defined as
  \[ C_{k,i}^{n}(x,y)=\left\{ \begin{array}{lll}
      i & \text{ if } & y=n\wedge x\in \{1-n,2-n\} \\
      i+1 & \text{ if } & (x,y)=(1-n,n-1) \\
      i-1 & \text{ if } & (x,y)=(2-n,n-1) \\
      2k & \text{ if } & y = -x -2 \wedge -n \leq x \leq -1\\
      2k-1 & \text{ if } & y = -x+2 \wedge 2 \leq x \leq n+1\\
      2k-2 & \text{ if } & y = -x+1 \wedge 2 \leq x \leq n+1\\
      0 &  & \text{otherwise} 
    \end{array}\right. \]
   
  In particular, $C_{k,k}^{0}$ is the $0$-uniform picture with the pattern $P_k$ of Lemma~\ref{lemma:l2kr_almost_highway} at the origin, $C_{k,k}^{1}$ is the same with pattern $P_k'$, and in general $C_{k,i}^n$ has pattern $P_i$ at position $(-n,n)$.
  We remark that $P_k'$ also has pattern $P_k$ but at position $(-1,1)$.

  We first show the fundamental highway.
  Applying Lemma~\ref{lemma:l2kr_almost_highway} with $i=k$ we obtain that for any $n$,
  \[T_{L^{2k}R}^{16k+2}(C_{k,k}^n, (-n,n), \uparrow) = (C_{k,k}^{n+1}, (-n-1,n+1), \uparrow)\] and the trace of this transition is $t_{k,k}$.
  In other words, $(C_{k,k}^0, (0,0), \uparrow)$ starts a highway of period $16k+2$, drift $(-1,1)$ and trace $t_{k,k}$.
  
  Now we prove the harmonic highway.
  Remark that, by applying Lemma~\ref{lemma:l2kr_almost_highway} with $i=1$ and $i=2k-1$ we have for any $n$ both
  \[ T_{L^{2k}R}^{24k-6}(C_{k,1}^n,(-n,n),\uparrow) = (C_{k,2k-1}^{n+1},(-n-1,n+1), \uparrow)\] with trace $t_{k,1}$ and
  \[T_{L^{2k}R}^{8k+10}(C_{k,2k-1}^{n+1},(-n-1,n+1),\uparrow) = (C_{k,1}^{n+2},(-n-2,n+2), \uparrow),\] with trace $t_{k,2k-1}$.

  Combining the two we obtain
  \[ T_{L^{2k}R}^{32k+4}(C_{k,1}^n,(-n,n),\uparrow) = (C_{k,1}^{n+2}, (-n-2,n+2), \uparrow).\]
  In other words, $(C_{k,1}^0,(0,0),\uparrow)$ starts a highway of period $32k+4$, drift $(-2,2)$ and trace $t_{k,1}\cdot t_{k,2k-1}$.

  The periodic repetition $(t_{k,1}\cdot t_{k,2k-1})^\omega$ has $32k+4$ as a smallest period because the factor subword $2.0.1.0$ only appears once in $t_{k,1}$ and does not appear in $t_{k,2k-1}$; so $32k+4$ is indeed the highway period of this dynamics. 
\end{proof}

But what is done for $i=1$ can be done for every $0<i<k$, thus we have a family of harmonic highways.

\begin{thm}[Variants of the harmonic highway for $L^{2k}R$]
  \label{th:l2kr_traces}
    For any integer $k>1$, there are at least $k-1$ distinct harmonic highways of the $L^{2k}R$ ant.
\end{thm}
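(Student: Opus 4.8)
The plan is to carry out, for every $i$ with $0<i<k$, the same gluing of two ``almost highways'' that produced the harmonic highway in the proof of Theorem~\ref{th:l2kr_length}, and then to separate the $k-1$ resulting highways from one another by a short factor appearing in their traces.

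Fix $k>1$ and $i\in\{1,\dots,k-1\}$. Since $0<i<2k$ and $0<2k-i<2k$, Lemma~\ref{lemma:l2kr_almost_highway} applies both with parameter $i$ and with parameter $2k-i$. Running it first with $i$ and then with $2k-i$ (exactly as the proof of Theorem~\ref{th:l2kr_length} does for $i=1$ and $2k-1$) yields, for every $n$,
\[ T_{L^{2k}R}^{\,24k-8i+2}\bigl(C_{k,i}^{n},(-n,n),\uparrow\bigr)=\bigl(C_{k,2k-i}^{n+1},(-n{-}1,n{+}1),\uparrow\bigr) \]
with trace $t_{k,i}$, and then
\[ T_{L^{2k}R}^{\,8k+8i+2}\bigl(C_{k,2k-i}^{n+1},(-n{-}1,n{+}1),\uparrow\bigr)=\bigl(C_{k,i}^{n+2},(-n{-}2,n{+}2),\uparrow\bigr) \]
with trace $t_{k,2k-i}$, since $24k-8(2k-i)+2=8k+8i+2$. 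Adding the two lengths gives $32k+4$ regardless of $i$, so $\bigl(C_{k,i}^{0},(0,0),\uparrow\bigr)$ starts a highway $H_i$ of period $32k+4$, drift $(-2,2)$ and periodic trace $\tau_i:=(t_{k,i}\cdot t_{k,2k-i})^{\omega}$; here $i<k$ is used to ensure $2k-i\neq i$, i.e.\ that the two glued pieces really are distinct almost highways.

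It remains to show $H_1,\dots,H_{k-1}$ are pairwise distinct. By the trace characterisation of highways recalled in Section~\ref{sec:defs}, two highways of a fixed ant coincide precisely when their periodic traces are conjugate (a cyclic shift only reflects the choice of representative pattern, and rotating a configuration does not alter the symbol sequence the ant reads), so I would show that $\tau_i$ and $\tau_j$ are non-conjugate for $i\neq j$. For this I would use the fingerprint factor $f_i:=\cplus{i}{1}.\cmin{i}{1}.i.0$: it is the opening of the block $\cycle{\cplus{i}{1}.\cmin{i}{1}.i.0}$ of $t_{k,i}$, hence a factor of $\tau_i$. Conversely, for $j\neq i$ with $j\leq k-1$ one has $i\notin\{j,\,2k-j\}$ (indeed $i\leq k-1<k+1\leq 2k-j$), so by the remark following Lemma~\ref{lemma:l2kr_almost_highway} the factor $f_i$ occurs in neither $t_{k,j}$ nor $t_{k,2k-j}$; and since every block $t_{k,\cdot}$ begins $0.0.0.0.1\cdots$ and ends $\cdots.2k.2k.(2k-\cdot).2k$, no length-$4$ factor straddling a junction of $\tau_j$ can equal $f_i$ (that would force $i{+}1=2k$ or $i{-}1=2k$). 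Hence $f_i$ is a factor of $\tau_i$ but not of $\tau_j$, the two words are non-conjugate, and $H_i\neq H_j$; moreover the same occurrence count (``$f_i$ appears once per period of $\tau_i$'') pins the minimal period of $\tau_i$ at $32k+4$, exactly as $2.0.1.0$ did for $i=1$, so each $H_i$ is a genuine harmonic highway. Letting $i$ range over $\{1,\dots,k-1\}$ gives the claimed $k-1$ distinct harmonic highways.

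I expect the only delicate point to be the combinatorics of the fingerprint, namely that $f_j$ occurs in $t_{k,i}$ iff $i=j$, then exactly once, and never across a block interface in a concatenation $t_{k,a}\cdot t_{k,b}$. This is best packaged as a small strengthening of the remark after Lemma~\ref{lemma:l2kr_almost_highway}: classify the finitely many length-$4$ factors appearing at block interfaces of the traces $t_{k,\cdot}$, and observe that a consecutive pair of the form $(s{+}1,\,s{-}1)$ occurs inside a $t_{k,\cdot}$ only at the head of a block $\cycle{(s{+}1).(s{-}1).\,\cdot\,.\,\cdot\,}$ of a $t_{k,s}$; granting this, everything above is routine.
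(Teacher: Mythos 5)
Your proposal is correct and follows essentially the same route as the paper: glue the two almost highways given by Lemma~\ref{lemma:l2kr_almost_highway} for parameters $i$ and $2k-i$ to get a period-$(32k+4)$ highway for each $0<i<k$, and distinguish them by the fingerprint factor $\cplus{i}{1}.\cmin{i}{1}.i.0$ in the trace. The only difference is that you spell out the junction/occurrence analysis for that factor, which the paper asserts without detail; that extra care is welcome but does not change the argument.
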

\begin{proof}
  Let $k>1$ and define the $\mathbb{Z}^2$ pictures $C_{k,i}^n$ as in the proof of Theorem~\ref{th:l2kr_length}.
  For any $0<i<k$, we have for any $n$, we apply Lemma~\ref{lemma:l2kr_almost_highway} for $i$ and $2k-i$ we obtain:
   \[ T_{L^{2k}R}^{24k-8i+2}(C_{k,i}^n,(-n,n),\uparrow) = (C_{k,2k-i}^{n+1},(-n-1,n+1), \uparrow)\] with trace $t_{k,i}$ and
  \[T_{L^{2k}R}^{8k+8i+2}(C_{k,2k-i}^{n+1},(-n-1,n+1),\uparrow) = (C_{k,i}^{n+2},(-n-2,n+2), \uparrow),\] with trace $t_{k,2k-i}$.
  Combining the two we obtain:
  \[ T_{L^{2k}R}^{32k+4}(C_{k,i}^n,(-n,n),\uparrow) = (C_{k,i}^{n+2}, (-n-2,n+2), \uparrow)\] with trace $t_{k,i}\cdot t_{k,2k-i}$.
  In other words, $(C_{k,i}^0,(0,0),\uparrow)$ starts a highway of period $32k+4$, drift $(-2,2)$ and trace $t_{k,i}\cdot t_{k,2k-i}$.

  These highways are indeed distinct, as mentioned above, since the trace factor subword $i+1.i-1.i.0$ does not appear in $t_{k,j}\cdot t_{k,2k-j}$ for $i\neq j$.
\end{proof}

\begin{rem}
  It is quite remarkable that, for a given $k>1$, all of these $k$ highways (the fundamental and the $k-1$ variants of the harmonic) leave the same print behind consisting of a diagonal of $2k$, a diagonal of $2k-1$ and a diagonal of $2k-2$. See Fig.~\ref{fig:L6R_highways}.
\begin{figure}[htp]
  \center
  \includegraphics[width=0.2\textwidth]{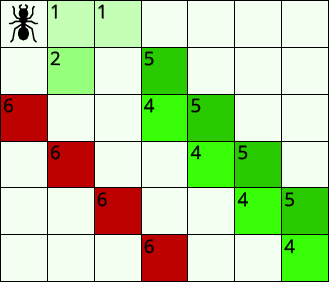} \hfill
  \includegraphics[width=0.2\textwidth]{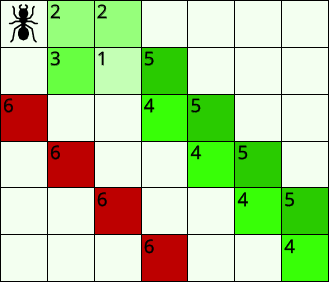} \hfill
  \includegraphics[width=0.2\textwidth]{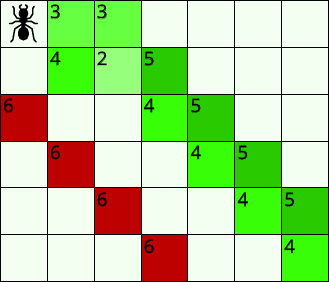}
  \caption{The three highways for the $L^6R$ ant, the rightmost is the fundamental highway.}
  \label{fig:L6R_highways}
\end{figure}
\end{rem}

\begin{rem}[LLR]
  The $LLR$ ant does not have these harmonic highways of length $36$ as there is no $i$ such that $0<i<1$.
  However the fundamental highway of length $18$ described above exists.
\end{rem}

\begin{rem}[$L^{2k+1}R$ ants]
  Analogously, it can be proved that for any $k>1$, the and with rule word $L^{2k+1}R$  has a highway of length $32k+20$.
  Its dynamics, for any \emph{\texttt{k}}, can be seen in
  \href{https://lutfalla.fr/ant/highway.html?antword=L2K1R&k=2}{\emph{\small\texttt{lutfalla.fr/ant/highway.html?antword=L2K1R\&k=2}}}, and its initial pattern is shown in Figure~\ref{fig:l2k+1r}.
  
  \begin{figure}[htp]
    \center \includegraphics[width=0.2\textwidth]{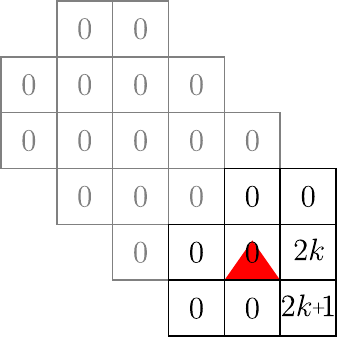}
    \caption{The configuration that starts a highway of period $32k+20$ for the  ant $L^{2k+1}R$.}
    \label{fig:l2k+1r}
  \end{figure}
\end{rem}

\section{Infinitely many highways for a single ant}

\begin{thm}
  The ant $LLRLRL$ has infinitely many highways.\\
  In particular for any $n$, $LLRLRL$ admits a highway of period $220+24n$.
\end{thm}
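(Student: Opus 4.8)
The plan is to mimic the structure of Section~\ref{sec:l2kr}: first isolate a bounded ``elementary gadget'' of the $LLRLRL$ dynamics that, when iterated $n$ times, lengthens a diagonal print by a fixed amount; then show that after the ant has built $n$ copies of this gadget it re-enters a fixed finite ``turnaround'' pattern, so that the whole $220+24n$-step cycle closes up with a fixed drift. Concretely, I would first run the simulator (or a careful hand computation) on the $LLRLRL$ highway of period $220$ to extract its trace word $t_0$, and on the period $244$ highway to extract $t_1$; comparing $t_1$ with $t_0$ should reveal that $t_1 = u \cdot g \cdot v$ and $t_0 = u \cdot v$ where $g$ is a fixed word of length $24$ — the ``unit cell'' of the growing print. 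The corresponding pattern statement is the analogue of Lemma~\ref{lemma:l2kr_elementary_cycles}: there is a finite pattern $G$ (a short stretch of the highway's homogeneous print, together with the ant crossing it) such that $T_{LLRLRL}^{24}(G) = \sigma^{(v_x,v_y)}(G)$ for the appropriate small shift $(v_x,v_y)$, valid because in this window the ant only sees symbols in a range forcing it to turn the same way each pass.

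Next I would assemble these pieces. Define a family of $\mathbb{Z}^2$ pictures $D_n$ exactly as $C_{k,i}^n$ was defined in Theorem~\ref{th:l2kr_length}: the $0$-uniform picture overlaid with a fixed ``head'' pattern $H$ (the part of the highway that does the turning and laying-down), followed by $n$ stacked copies of the print unit along the relevant diagonal. The key claim is then
\[ T_{LLRLRL}^{220+24n}\big(D_n,(i,j),d\big) = \sigma^{(a,b)}\big(D_n,(i,j),d\big) \]
for a fixed drift $(a,b)$ and fixed ant state $d$, proved by decomposing the trace as $t_0$ with $n$ insertions of $g$: the ant first traverses the head, then makes $n$ passes each of which is an application of the $G$-gadget lemma (adding $24$ steps and extending the print by one unit), then executes the fixed turnaround; summing $220 + n\cdot 24$ gives the period. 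To conclude these really are distinct highways (hence infinitely many), I would use the same trace-combinatorics trick as in Theorem~\ref{th:l2kr_traces} and in the remark after Lemma~\ref{lemma:l2kr_almost_highway}: the periodic word $(t_0 \text{ with $n$ copies of $g$})^\omega$ has $220+24n$ as its minimal period — e.g. because some short factor of the head $H$ occurs exactly once per period — so different $n$ give highways of different period, and in particular infinitely many.

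The main obstacle I expect is the first step: finding the right finite window $G$ and checking that within it the ant's symbol range stays in a band where the turn direction is forced, so that the gadget lemma holds verbatim for every $n$. Unlike the clean $L^{2k}R$ case, where the ``turn left until you hit $2k$'' mechanism is transparent, for $LLRLRL$ the $L/R$ pattern alternates and the relevant invariant pattern is irregular, so identifying $H$, $G$, the gluing positions, and the exact six-or-so individual moves at the turnaround (the analogue of the ``$.2k.2k.\cmin{2k}{i}.2k.$'' tail in $t_{k,i}$) will require careful bookkeeping, most naturally done by reading it off the simulator and then verifying the finitely many pattern transitions by hand. Everything after that — summing the period, exhibiting the drift, and the minimal-period argument for distinctness — is routine and parallels the $L^{2k}R$ proofs already given.
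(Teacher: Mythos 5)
Your overall philosophy --- isolate a finite repeatable gadget, show each extra copy costs exactly $24$ steps, and induct on $n$ --- is the right one, and you correctly anticipate that the hard part is identifying the gadget from simulation. But the specific shape of your key lemma does not match how this highway actually works, and as formulated it would fail. You posit a window $G$ with $T_{LLRLRL}^{24}(G)=\sigma^{(v_x,v_y)}(G)$, i.e.\ $24$ \emph{consecutive} steps per unit cell, and correspondingly that the trace of the period-$244$ highway is obtained from that of the period-$220$ one by inserting a \emph{single} factor $g$ of length $24$. The actual structure (the paper's proof) is a shuttle: the pattern is $M_1\cdot L_1^n\cdot B_1$, a corridor of $n$ link widgets strung between a main widget $M$ and a bounce-back widget $B$, and in one period the ant traverses the corridor \emph{four} times (out, back, out, back), spending $84$, $4$, $100$ and $30$ steps in the end widgets between traversals. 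Each link widget contributes $12+4+4+4=24$ steps spread over those four visits, cycling through four distinct states $L_1\to L_2\to L_3\to L_4\to L_1$ (shifted) before the configuration recurs with drift $(-2,-2)$. So no single-pass $24$-step gadget lemma holds, and comparing $t_1$ with $t_0$ would reveal four insertion points (one block of length $12$ and three of length $4$), not one block of length $24$; your first step, and hence the induction built on it, would not go through as written.

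The fix is to replace your one gadget lemma by four separate link-widget lemmas (one per traversal, each asserting that a link widget in state $L_i$ is crossed in the stated number of steps and left in state $L_{i+1}$), together with the bookkeeping that $M$ and $B$ each alternate between two states and return to their original state, shifted, after the full cycle; summing then gives $220+24n$. Your closing remark that distinct $n$ give distinct highways is fine (distinct supports/periods suffice; the minimal-period argument is not even needed here), but everything upstream of it needs the four-pass restructuring.
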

\begin{proof}
  To prove this result we introduce three widgets, two of them in two versions: a main widget $M_1$ and $M_2$ and a bounce back widget $B_1$ and $B_2$; also a link widget in four versions: $L_1$, $L_2$, $L_3$, and $L_4$.
  
We will prove that, for any $n\in\mathbb{N}$, the pattern $c_n = M_1\cdot L_1^n \cdot B_1$ starts a highway of period $220+24n$, and drift $(-2,-2)$.
Figure~\ref{fig:340} shows a picture of this highway where the widgets can be identified, together with its print leaved after several periods.
The proof is performed by describing the different stages of the ant movement, which can be verified by the reader by hand or by means of any ant simulator, for example \href{https://lutfalla.fr/ant/highway.html?antword=LLRLRL}{\texttt{\small lutfalla.fr/ant/highway.html?antword=LLRLRL}}.

{\bf Start.} The ant start over $c_n=M_1\cdot L_1^n \cdot B_1$ at the red arrow, as Figure~\ref{fig:P1} shows.
The ant will spend 84 steps inside $M_1$, to finally exit and enter the first of the series of $n$ $L_1$s at the position and direction marked by the yellow triangle.
It will take 12 steps to transverse each $L_1$ to finally enter pattern $B_1$ at the position and direction marked by the cyan arrow, where it will spend only 4 steps.
At this point, $M_1$ has been transformed into $M_2$ concatenated with a half $L_2$ and each of the $L_1$s has been transformed into an $L_2$, except for the last one, which slightly differs from the others and will be called $L_2'$.
%paquete 1
\begin{figure}[htb!]
\begin{center}
\includegraphics[width=.7\textwidth]{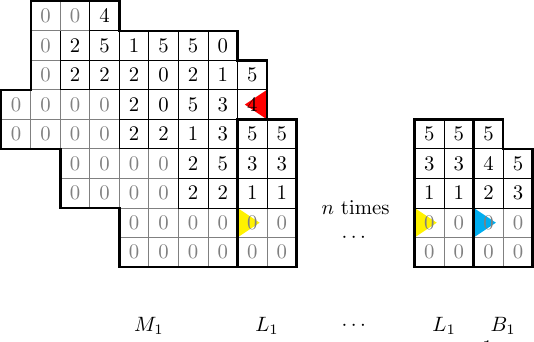}
\caption{Patterns $M_1$, $B_1$, and $L_1$. The red triangle marks the starting ant's position and direction.
The other 3 triangles mark the entering positions of the ant to the respective widget.}
\label{fig:P1}
\end{center}
\end{figure}

{\bf Rebound 1.} Now the ant faces $L_2'$ and a series of $n-1$ $L_2$s, it takes 4 steps to transverse each of them (see Figure~\ref{fig:L2}).
After the ant visit, this series of $n$ widgets is modified.
The column of the first $L_2$, that was adjacent to $B_1$ in the previous stage, will fuse with $B_1$ to become $B_2$.
Starting from the other column of this widget, a new series of $n$ link widgets is formed, that we call $L_3$.

%L2
\begin{figure}[htb!]
\begin{center}
\includegraphics[width=0.4\textwidth]{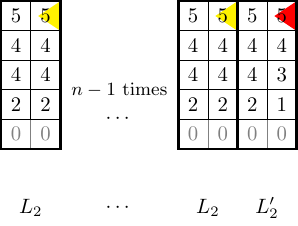}
\caption{Pattern $L_2$. The ant will transverse each block in 4 steps, transforming them into a series of $L_3$s.}
\label{fig:L2}
\end{center}
\end{figure}

{\bf New start.} Now the ant is on $M_2\cdot L_3^n \cdot B_2$ at the position marked by the red arrow (see Figure~\ref{fig:P2}).
The ant will spend 100 steps inside $M_2$, to finally exit it to enter the series of $n$ $L_3$s at the position and direction marked by the yellow triangle.
It will take only 4 steps to transverse each $L_3$ to finally enter pattern $B_2$ at the position and direction marked by the cyan arrow, where it will spend 30 steps.
At this point, $M_2$ has been transformed into $M_1$ concatenated with a half of $L_4$ and each of the $L_3$s has been transformed into an $L_4$, except for the last one, which slightly differs from the others and will be called $L_4'$.

%paquete 2
\begin{figure}[htb!]
\begin{center}
\includegraphics[width=.7\textwidth]{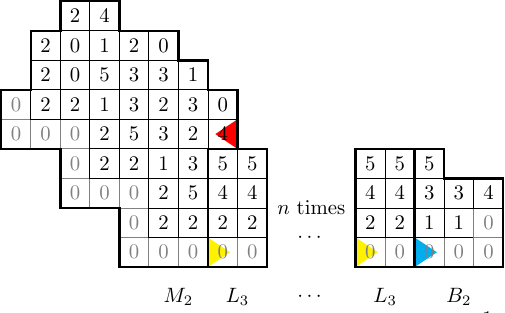}
\caption{Patterns $M_2$, $B_2$, and $L_3$. The red triangle marks the starting ant's position and direction of this stage.
The other 3 triangles mark the entering positions of the ant to the respective widget.}
\label{fig:P2}
\end{center}
\end{figure}

{\bf Rebound 2.} Now the ant faces $L_4'$ and a series of $n-1$ $L_4$s, again it takes 4 steps to transverse each of them (see Figure~\ref{fig:L4}).
After the ant visit, this series of $n$ widgets is modified.
The column of the first $L_4$, that was adjacent to $B_2$ in the previous stage, together with two cells of $B_2$ become $B_1$.
Starting from the other column of this widget, the series of $n$ $L_1$ appears.

%L4
\begin{figure}[htb!]
\begin{center}
\includegraphics[width=.4\textwidth]{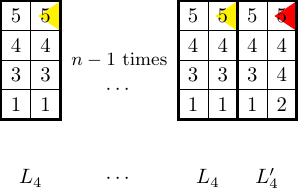}
\caption{Pattern $L_4$. The ant will transverse each block in 4 steps, transforming it into $L_1$.}
\label{fig:L4}
\end{center}
\end{figure}

After these four stages, $c_n$ appears again, but shifted by (-2,-2), covering the gray 0s, and if enough 0s are placed in the background, a new cycle of $220+24n$ steps starts.
\end{proof}

\begin{figure}[htbp]
  \center
  \includegraphics[width=0.5\textwidth]{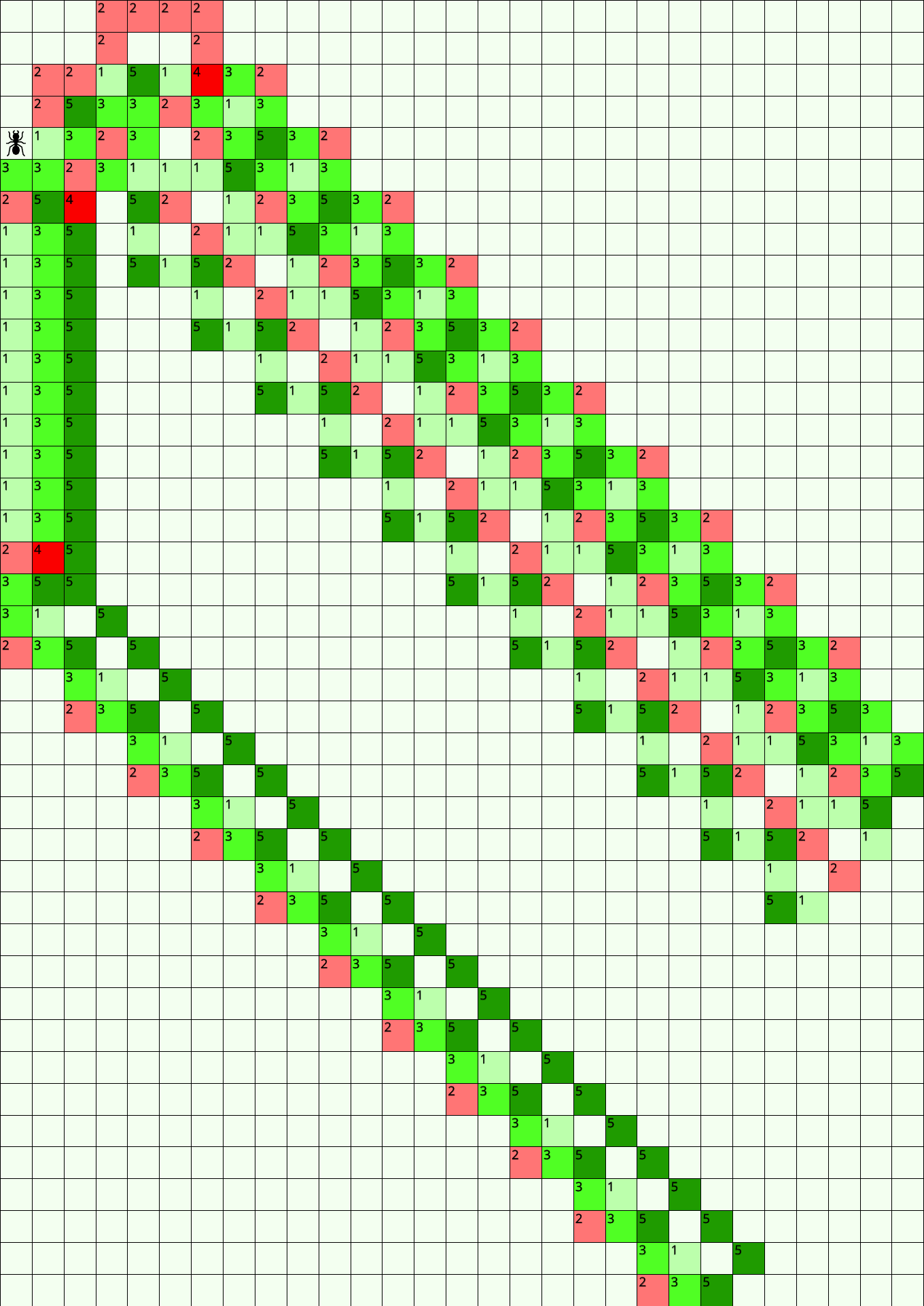}
  \caption{The configuration produced by a highway of period $220+5\cdot 24$ for the $LLRLRL$ ant after $11\cdot(220+5\cdot 24)$ iterations.
    We can identify the $L$ widgets acting as a corridor between the two ``walls'' $M$, on top, and $B$, in the opposite extreme of the series of $L$s.
    The two diagonals on the picture are the prints leaved by widgets $M$ and $B$.\\
  Note that left-turning (resp. right-turning) states are emphasized in green (resp. red). }
  \label{fig:LLRLRL_highway_340}
\end{figure}

\FloatBarrier
 \bibliography{ants}

\begin{thebibliography}{10}

\bibitem{BeurToma97}
Olivier Beuret and Marco Tomassini.
\newblock Behaviour of multiple generalized {Langton’s} ants.
\newblock In {\em Artificial Life V}, page 45–50. MIT Press, Cambridge,
  Mass., 1997.

\bibitem{BuniTrou92}
Leonid Bunimovich and Serge Troubetzkoy.
\newblock Recurrence properties of {Lorentz} lattice gas cellular automata.
\newblock {\em J. of Stat. Phys.}, 67(1):289--302, 1992.

\bibitem{DewdTurk}
Alexander~Keewatin Dewdney.
\newblock Computer recreations: Two-dimensional {Turing} machines and turmites
  make tracks on a plane.
\newblock {\em Scientific American}, 261:180–183, sep 1989.

\bibitem{DorbGaja}
Paul Dorbec and Anah\'{\i} Gajardo.
\newblock Langton's flies.
\newblock {\em Journal of Physics A: Mathematical and Theoretical},
  41(40):405101, sep 2008.

\bibitem{GGM02}
Anah{\'{\i}} Gajardo, Andr{\'e}s Moreira, and Eric Goles.
\newblock Complexity of {L}angton's ant.
\newblock {\em Discrete Appl. Math.}, 117(1-3):41--50, 2002.

\bibitem{GajaJAC}
Anahí Gajardo.
\newblock Sofic one head machines.
\newblock In B.~Durand, editor, {\em Journ\'ees Automates Cellulaires}, pages
  54--64, 2008.

\bibitem{GaleProp94}
David Gale and Jim Propp.
\newblock {F}urther {A}nt-ics.
\newblock {\em Math. Intelligencer}, 16:37--42, 1994.

\bibitem{GPST95}
David Gale, Jim Propp, Scott Sutherland, and Serge Troubetzkoy.
\newblock {F}urther {T}ravels with {M}y {A}nt.
\newblock {\em Math. Intelligencer}, 17:48--56, 1995.

\bibitem{KongCohe91JSP}
Xiang-Peng Kong and Ezechiel Godert~David Cohen.
\newblock Diffusion and propagation in triangular {Lorentz} lattice gas
  cellular automata.
\newblock {\em Journal of Statistical Physics}, 62:737--757, 1991.

\bibitem{Kurk}
Petr K\r{u}rka.
\newblock On topological dynamics of {T}uring machines.
\newblock {\em Theoretical Computer Science}, 174(1-2):203--216, 1997.

\bibitem{Lang86}
Christopher Langton.
\newblock Studying artificial life with cellular automata.
\newblock {\em Physica D}, 22:120--149, 86.

\bibitem{MGHM}
Diego Maldonado, Anah{\'{\i}} Gajardo, Benjamin Hellouin~de Menibus, and
  Andr{\'e}s Moreira.
\newblock Nontrivial turmites are {Turing}-universal.
\newblock {\em J. of Cell. Automata}, 13(5/6):373--392, 2018.

\end{thebibliography}
\end{document}